\documentclass[conference, letterpaper]{IEEEtran}
\pdfoutput=1
%% depending on your installation, you may wish to adjust the top margin:
\addtolength{\topmargin}{9mm}

\IEEEoverridecommandlockouts
% The preceding line is only needed to identify funding in the first footnote. If that is unneeded, please comment it out.
\usepackage{cite}
\usepackage{amssymb,amsfonts,amsthm}
\usepackage[cmex10]{amsmath}
\usepackage{algorithmic}
\usepackage{graphicx}
\usepackage{textcomp}
\usepackage{tikz}\usetikzlibrary{positioning,arrows}
\usepackage{xcolor}
\usepackage[utf8]{inputenc} 
\usepackage[T1]{fontenc}
\usepackage{empheq}
\usepackage{url}
\usepackage{ifthen}
\usepackage{subcaption}
\usepackage[section]{placeins}
\usepackage{calrsfs}
\usepackage[colorlinks,allcolors=black]{hyperref}

\newtheorem{theorem}{Theorem}
\newtheorem{corollary}{Corollary}
\newtheorem{proposition}{Proposition}

\theoremstyle{definition}
\newtheorem{definition}{Definition}
\newtheorem{example}{Example}
\newtheorem{notation}{Notation}
\newtheorem{remark}{Remark}

\newcommand{\numberset}{\mathbb}

\newcommand{\Z}{\numberset{Z}}

\newcommand{\R}{\numberset{R}}

\newcommand{\F}{\numberset{F}}

\newcommand{\mP}{\mathcal{P}}
\newcommand{\mT}{\mathcal{T}}

\newcommand{\floor}[1]{{\left\lfloor{#1}\right\rfloor}}

\newcommand{\PG}{\textnormal{PG}}

\newcommand{\col}{\textnormal{col}}
\newcommand{\conv}{\textnormal{conv}}

\newcommand{\mR}{\mathcal{R}}

\newcommand\qbin[3]{\left[\begin{matrix} #1 , #2 \end{matrix} \right]_{#3}}

\newcommand{\ceil}[1]{{\left\lceil{#1}\right\rceil}}

% sets of numbers

% graph structures

% letters

% general math

\def\BibTeX{{\rm B\kern-.05em{\sc i\kern-.025em b}\kern-.08em
T\kern-.1667em\lower.7ex\hbox{E}\kern-.125emX}}

%%%%%%%%%%%%%%%%%%%%%%%%%%%%%%%%%%%%%%%%%%%%%%%%%%%%%%%%%%%%%%%%%%%%%%%%%%%

\title{On the Parameters of Codes for Data Access\thanks{A.B.K. is supported by the Dutch Research Council through grant VI.Vidi.203.045. A. R. is supported
by OCENW.KLEIN.539, VI.Vidi.203.045, and by the
Academy of Arts and Sciences of the Netherlands. E.~Soljanin's work was in part supported by NSF Award CIF-2122400.}}

\author{%
   \IEEEauthorblockN{\textbf{Altan B. K{\i}l{\i}\c{c}}\IEEEauthorrefmark{1}, \textbf{Alberto Ravagnani}\IEEEauthorrefmark{1}, \textbf{Emina Soljanin}\IEEEauthorrefmark{2}}
    \IEEEauthorblockA{\IEEEauthorrefmark{1}%
    Department of Mathematics and Computer Science, Eindhoven University of Technology, the Netherlands}
    \IEEEauthorblockA{\IEEEauthorrefmark{2}%
    Department of Electrical \& Computer Engineering, %\\
    Rutgers University, U.S.A.}
    \texttt{\{a.b.kilic, a.ravagnani\}@tue.nl, emina.soljanin@rutgers.edu}
}

%%%%%%%%%%%%%%%%%%%%%%%%%%%%%%%%%%%%%%%%%%%%%%%%%%%%%%%%%%%%%%%%%%%%%%%%%%%
%%%%%%%%%%%%%%%%%%%%%%%%%%%%%%%%%%%%%%%%%%%%%%%%%%%%%%%%%%%%%%%%%%%%%%%%%%%

\begin{document}

\maketitle

%%%%%%%%%%%%%%%%%%%%%%%%%%%%%%%%%%%%%%%%%%%%%%%%%%%%%%%%%%%%%%%%%%%%%%%%%%%
%%%%%%%%%%%%%%%%%%%%%%%%%%%%%%%%%%%%%%%%%%%%%%%%%%%%%%%%%%%%%%%%%%%%%%%%%%%

\begin{abstract}
This paper studies two crucial problems in the context of coded distributed storage systems directly related to their performance: 1) for a fixed alphabet size, determine the minimum number of servers the system must have 
for its service rate region
to contain a prescribed set of points; 2)
for a given number of servers,
determine 
the minimum alphabet size for which the service rate region of the system contains a prescribed set of points. 
The paper establishes rigorous upper and lower bounds, as well as code constructions
based on techniques from coding theory, optimization, and projective geometry. 
\end{abstract}

%%%%%%%%%%%%%%%%%%%%%%%%%%%%%%%%%%%%%%%%%%%%%%%%%%%%%%%%%%%%%%%%%%%%%%%%%%%
%%%%%%%%%%%%%%%%%%%%%%%%%%%%%%%%%%%%%%%%%%%%%%%%%%%%%%%%%%%%%%%%%%%%%%%%%%%

\section{Introduction}
\label{sec:intro}

Storage layers provide data access services for executing applications. Thus, a computing system’s overall performance depends on its underlying storage system’s data access performance. Modern storage systems replicate data objects across multiple servers. An object’s replication degree corresponds to its expected demand. However, the access request volume and data object popularity fluctuate in practice. Redundancy schemes that combine erasure coding with replication can support such scenarios better than replication alone. 

Storage servers can handle access requests up to a specific maximal service rate.
The service rate region of a redundant storage scheme is a recently introduced performance metric notion \cite{aktacs2017service}; see also \cite{aktacs2021service} and references therein. It is defined as a set of all data access request rates that the system implementing the scheme can support. 

Two main problem directions are associated with distributed storage access: 1) For a given (implemented) redundancy scheme, we ask what its service rate region is. 2) For a given (desired) service rate region, we ask which redundancy scheme has the service rate region that includes the desired one with some optimal cost or some required properties. Many other related problems are briefly outlined in \cite[Sec.~VIII]{aktacs2021service}, most notably performance analysis of storage schemes.  

The service rate region is a new concept and the subject of several recent papers \cite{aktacs2017service,aktacs2021service,kazemi2020combinatorial,kazemi2020geometric,alfarano2022dual}.
This work has answered some questions in the above directions, primarily for selected binary codes or systems storing two data objects. 
The main difference between this collection of papers and nearly all recent work on coded distributed storage is that it primarily addresses the external uncertainty in the storage systems (download request fluctuations) rather than the internal uncertainty (e.g., straggling) in operations of the system itself (see, e.g.,
\cite{joshi2012coding, gardner2015reducing, shah2016when, rashmi2016ec,lee2017speeding, dutta2016short, mallick2020rateless}).
A related line of work concerning external uncertainty considers systems with uncertainty in the mode and level of access to the system \cite{alloc:PengNS21, alloc:PengS18}.

This paper addresses a problem within the second main direction mentioned above: designing codes for a given service rate region. It focuses on scenarios when a set of the points in the region is provided and asks the following two questions: 
\begin{enumerate}
    \item How many servers do we need if the field size is fixed?
    \item What is the minimum field size for a code over a fixed number of servers (i.e., code length)? \label{e2}
\end{enumerate}
Both questions are essential in practice. The first question concerns systems with computational complexity (field size) limits, and we want to minimize the number of servers. The second question is relevant when the number of servers is limited, and we want to minimize the field size to reduce the computational complexity. The first problem was studied in \cite{kazemi2021efficient}. The second problem has not been studied before.

The rest of the paper is organized as follows: Sec.~\ref{sec:srr} formulates the problem. Sec.~\ref{sec:methods}, presents two general approaches to attack the problems of the paper. Sec.~\ref{sec:bound} is devoted to bounds and existence results. Sec.~\ref{sec:interplay} shows some applications of the paper's results. Because of space constraints, most proofs are omitted and will appear in an extended version of this work.

%%%%%%%%%%%%%%%%%%%%%%%%%%%%%%%%%%%%%%%%%%%%%%%%%%%%%%%%%%%%%%%%%%%%%%%%%%%
%%%%%%%%%%%%%%%%%%%%%%%%%%%%%%%%%%%%%%%%%%%%%%%%%%%%%%%%%%%%%%%%%%%%%%%%%%%

\section{Service Rate Region and Problem Statement}\label{sec:srr}
In this section, we establish the notation, define the service rate region of distributed data storage systems, and present the parameters that are the subject of the two main problems 
addressed in this paper. For a standard reference on coding theory, we refer to \cite{macwilliams1977theory}.

\begin{notation}
Throughout the paper, $\mP$ denotes the set of prime powers, $\F_q$ denotes a finite field of size $q$ with $q \in \mP$. Unless otherwise stated, all vectors in the paper are column vectors, and $e_i \in \F_q^k$ is the $i$-th standard basis vector for $i \in \{1,\ldots,k\}$. 
\end{notation}

We consider a distributed storage system with $n \in \Z_{\ge k}$ identical servers where $k \in \Z_{\ge 2}$ distinct data objects (elements of~$\F_q$) are stored on the servers. Each server stores an $\F_q$-linear combination of the $k$ objects. Therefore, the system can be specified by a matrix $G \in \F_q^{k \times n}$, called the \textbf{generator matrix} of the system. Each column of~$G$ corresponds to a server. More precisely, if $u = (u_1,\ldots,u_k)$ is the tuple of objects to be stored, then the $\nu$-th column of $G$ stores the $\nu$-th coordinate of $u\cdot G$ for $\nu \in \{1,\ldots,n\}$. If a column of $G$ is a non-zero scalar multiple of one of the standard basis vectors, it is called a \textbf{systematic server}. Otherwise, it is called a \textbf{coded server}.

\begin{notation}
    In the sequel, $G$ denotes a $k \times n$ matrix over $\F_q$ of rank $k$. We assume that $G$ has no zero column. The list of columns of $G$ is denoted by $\col(G)$, and the $\nu$-th column of~$G$ is denoted by $G^\nu$. 
\end{notation}

In our model, each server can simultaneously process the data access requests with a cumulative serving rate of at most one.
A \textbf{demand vector} is a $k$-tuple of nonnegative real numbers 
$(\lambda_1, \ldots, \lambda_k)$. The assumption that $G$ has no zero column makes sense since a server storing the zero linear combination is useless. The restriction on the capacity of the servers (each server has capacity 1) implies that
the system cannot ``support'' any demand vector. When a server 
reaches its capacity, other servers need to be contacted to serve users.
The next definition formalizes these concepts.

\begin{definition}
\label{def:RandSRR}
For each $i \in \{1,\ldots,k\}$ we construct sets $\mR_i(G)$ as follows. A set $R \subseteq \{1,\ldots,n\}$ is in $\mR_i(G)$ if:
\begin{itemize}
    \item $e_i \in \langle G^\nu \mid \nu \in R\rangle_{\F_q}$, and
    \item there is no $R' \subsetneq R$ with $R' \in \mR_i(G)$.
\end{itemize}
The elements of $\mR_i(G)$ 
are called the (\textbf{minimal}) \textbf{recovery sets} for the $i$-th object.
A demand vector $(\lambda_1,\ldots,\lambda_k)$ is \textbf{supported} by the system (or supported by $G$) if there exists real numbers  
$$(\lambda_{i,R} \in \R_{\ge 0} \mid i \in \{1,\ldots,k\}, \, R \in \mR_i(G))$$ with the following properties:
\begin{empheq}[left = \empheqlbrace]{alignat=3}
        \sum_{R \in \mR_i}\lambda_{i,R} &= \lambda_i &&\textnormal{ for } i \in \{1,\ldots,k\}, \label{srr1}\\
        \sum_{i=1}^k \sum_{\substack{R \in \mR_i \\ \nu \in R}} \lambda_{i,R} &\leq 1  \ &&\textnormal{ for } \nu \in \{1,\ldots,n\}. \label{srr2} 
    \end{empheq}
Lastly, the \textbf{service rate region} of the system generated by $G$ is defined as 
$$\Lambda(G):= \{\lambda \in \R^k \mid \lambda \mbox{ is supported by } G\} \subseteq \R^k.$$
\end{definition}
\noindent
The constraints in \eqref{srr1} ensure that the demand vector is served, and those in \eqref{srr2} guarantee that the servers are not overloaded. 

\begin{example}
\label{ex1}
Let $$ G= \begin{pmatrix}
1 & 0 & 0 & 1   \\
0 & 1 & 0 & 1   \\
0 & 0 & 1 & 1 
\end{pmatrix}\in\F_2^{3\times 4}.$$ 
We have $\mR_i(G) = \{\{i\},\{1,2,3,4\}\setminus \{i\}\}$ for $i \in \{1,2,3\}$. The service rate region $\Lambda(G)$ is
depicted in Figure~\ref{fig:ex1}. As an example, we have $(1.4,0.6,0.6) \in \Lambda(G)$ by letting 
\begin{align*}
    \lambda_{1,R}=\begin{cases}
        1 & \textnormal{ if } R=\{1\},\\
        0.4 & \textnormal{ if } R=\{2,3,4\},\\
            \end{cases} \
     \lambda_{i,R}=\begin{cases}
        0.6 & \textnormal{ if } R=\{i\},\\
        0 & \textnormal{ otherwise,}
    \end{cases}  
\end{align*}
for $i \in \{2,3\}$ in Definition \ref{def:RandSRR}. 
\end{example}

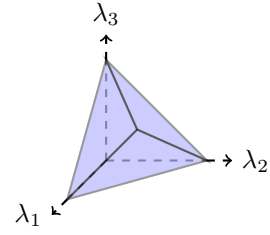
\begin{figure}[h!]
    \centering
    \begin{tikzpicture}[thick,scale=0.67]
 \coordinate (A1) at (0,0,0);
 \coordinate (A2) at (2,0,0);
 \coordinate (A3) at (0,2,0);
 \coordinate (A4) at (0,0,2);
 \coordinate (A5) at (1,1,1);

 \begin{scope}[thick,dashed,opacity=0.5]
 \draw (A1) -- (A2);
 \draw (A1) -- (A3);
 \draw (A1) -- (A4);

 \end{scope}

 \draw[fill=blue!50,opacity=0.4] (A5) -- (A2) -- (A3) -- cycle;
 \draw[fill=blue!50,opacity=0.4] (A5) -- (A2) -- (A4)  -- cycle;
 \draw[fill=blue!50,opacity=0.4] (A5) -- (A4) -- (A3) -- cycle;
 %\draw[fill=blue!50,opacity=0.4] (A8) -- (A5) -- (A6) -- (A7) -- (A4) -- cycle;

 \draw[->, dashed, thick,black] (2,0,0)--(2.5,0,0) node[right]{$\lambda_2$};
 \draw[->, dashed, thick,black] (0,0,2)--(0,0,2.8) node[left]{$\lambda_1$};
 \draw[->, dashed, thick,black] (0,2,0)--(0,2.5,0) node[above]{$\lambda_3$};
% \draw[thick,black] (A3)--(A8)--(A4)--(A7)--(A2)--(A3);
 \end{tikzpicture}
        \caption{The service rate region for Example~\ref{ex1}.}
    \label{fig:ex1}
\end{figure}

Note that the service rate region in Example \ref{ex1} is a polytope. Recall that a polytope $P \subseteq \R^k$ is called \textbf{down-monotone} if~$0 \le x \le y$ and $y \in P$ imply $x \in P$. The service rate region is always a convex, down-monotone polytope.
Therefore, we will call it the 
\textbf{service rate region polytope} of the matrix $G$, see \cite{service2023polytope}.
Recall that any convex polytope is the convex hull of its vertices  (see \cite{ziegler2012lectures} for more on polytopes). Therefore, we have the following list of observations.

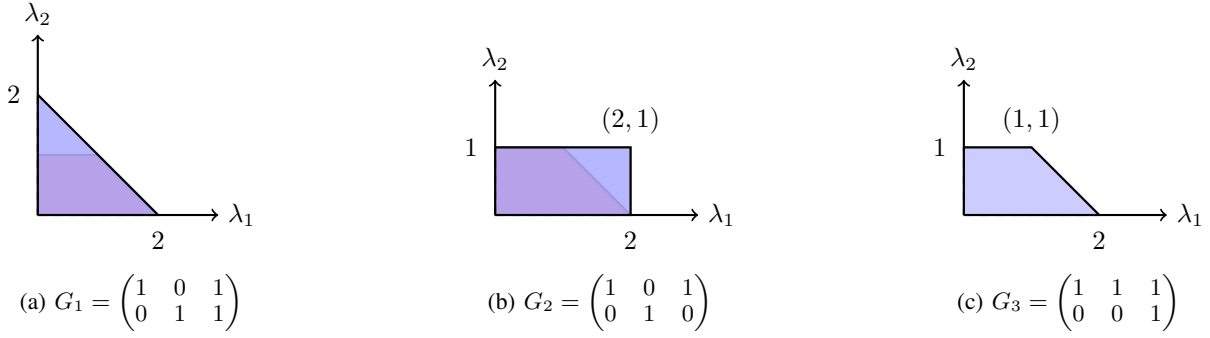
\begin{figure*}
    \centering
\subcaptionbox{$G_1 = \begin{pmatrix}
1 & 0 & 1   \\
0 & 1 & 1 
\end{pmatrix}$}
[.3\linewidth]{\begin{tikzpicture}[thick,scale=0.8]
 \coordinate (A1) at (0,0);
 \coordinate (A2) at (2,0);
 \coordinate (A4) at (1,1);
 \coordinate (A3) at (0,2);
 \coordinate (A5) at (0,1);
\node at (A3) [below = 1mm of A2] {$2$};
\node at (A2) [left = 1mm of A3] {$2$};
 \begin{scope}[thick,dashed,,opacity=0.6]
 \draw (A1) -- (A2);
 \draw (A1) -- (A3);

 \end{scope}

 \draw[fill=red!70,opacity=0.2] (A1) -- (A2) -- (A4) -- (A5)-- cycle;
 \draw[fill=blue!70,opacity=0.4] (A1) -- (A2) -- (A3)-- cycle;
 \draw[->, thick,black] (0,0)--(3,0) node[right]{$\lambda_1$};
\draw[->, thick,black] (0,0)--(0,3) node[above]{$\lambda_2$};
\draw[thick,black] (0,2)--(2,0);
 \end{tikzpicture}}
 \hspace{.03\textwidth}
\subcaptionbox{$G_2 = \begin{pmatrix}
1 & 0 & 1   \\
0 & 1 & 0 
\end{pmatrix}$}
[.3\linewidth]{\begin{tikzpicture}[thick,scale=0.9]
\coordinate (A1) at (0,0);
 \coordinate (A2) at (2,0);
 \coordinate (A3) at (0,1);
 \coordinate (A4) at (2,1);
 \coordinate (A5) at (1,1);
\node at (A3) [left = 1mm of A3] {$1$};
\node at (A2) [below = 1mm of A2] {$2$};
\node at (A4) [right = 0.5mm, above = 0.5mm of A4] {$(2,1)$};
 \begin{scope}[thick,dashed,,opacity=0.6]
 \draw (A1) -- (A2);
 \draw (A1) -- (A3);

 \end{scope}
 
 \draw[fill=red!70,opacity=0.2] (A1) -- (A2) -- (A5)-- (A3)--cycle;
 \draw[fill=blue!70,opacity=0.4] (A1) -- (A2) -- (A4)-- (A3) -- cycle;
 \draw[->, thick,black] (0,0)--(3,0) node[right]{$\lambda_1$};
\draw[->, thick,black] (0,0)--(0,2) node[above]{$\lambda_2$};
\draw[thick,black] (0,1)--(2,1)--(2,0);
\end{tikzpicture}}
 \hspace{.03\textwidth}
\subcaptionbox{$G_3 = \begin{pmatrix}
1 & 1 & 1   \\
0 & 0 & 1 
\end{pmatrix}$}
[.3\linewidth]{\begin{tikzpicture}[thick,scale=0.9]
\coordinate (A1) at (0,0);
 \coordinate (A2) at (2,0);
 \coordinate (A3) at (0,1);
 \coordinate (A4) at (1,1);
\node at (A3) [left = 1mm of A3] {$1$};
\node at (A2) [below = 1mm of A2] {$2$};
\node at (A4) [right = 0.5mm, above = 0.5mm of A4] {$(1,1)$};
 \begin{scope}[thick,dashed,,opacity=0.6]
 \draw (A1) -- (A2);
 \draw (A1) -- (A3);

 \end{scope}
 
 %\draw[fill=red!70,opacity=0.2] (A1) -- (A2) -- (A4)-- (A3) --  cycle;
 \draw[fill=blue!50,opacity=0.4] (A1) -- (A2) -- (A4)-- (A3) -- cycle;
 \draw[->, thick,black] (0,0)--(3,0) node[right]{$\lambda_1$};
\draw[->, thick,black] (0,0)--(0,2) node[above]{$\lambda_2$};
\draw[thick,black] (0,1)--(1,1)--(2,0);
\end{tikzpicture}}
\caption{Set $S =\{(1,1),(2,0)\}$ is in $\Lambda(G_i)$ for $i \in \{1,2,3\}$ and $\Lambda(G_3) = \conv(S)$. Set $\conv(S)$ is shown in (a) and (b) with a different color. Observe that 1) any code whose service rate region contains the points in $S$ must contain the minimal region shown in (c), and 2) regions that are proper supersets of the minimal are achievable without enlarging the field size.} \label{fig:cool}
\end{figure*}

\begin{proposition}
\label{prop:reduced}
Let $S \subseteq \R^k$ and $G \in \F_q^{k \times n}$ be a full-rank matrix. We have 
\begin{enumerate}
    \item $S \subseteq \Lambda(G)$ implies $\conv(S) \subseteq \Lambda(G)$, where $\conv(S)$ is the convex hull of the points of $S$,
    \item $\conv(S) = \conv(S')$ where 
$$S' = \{s' \in S \mid s' \notin \conv(S \setminus \{s'\}\}.$$
\end{enumerate}
\end{proposition}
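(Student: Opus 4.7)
For part (1), the plan is to verify directly from Definition \ref{def:RandSRR} that $\Lambda(G)$ is convex, then invoke the defining property of the convex hull. Given $\lambda, \mu \in \Lambda(G)$ with supporting coefficient families $(\lambda_{i,R})$ and $(\mu_{i,R})$ and any $\alpha \in [0,1]$, the family $\alpha \lambda_{i,R} + (1-\alpha) \mu_{i,R}$ satisfies the constraints \eqref{srr1} and \eqref{srr2} relative to $\alpha \lambda + (1-\alpha)\mu$, because both constraints are linear in the variables $\lambda_{i,R}$. Therefore $\Lambda(G)$ is convex, and since $\conv(S)$ is by definition the smallest convex set containing $S$, the hypothesis $S \subseteq \Lambda(G)$ forces $\conv(S) \subseteq \Lambda(G)$.

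For part (2), the inclusion $\conv(S') \subseteq \conv(S)$ is immediate from $S' \subseteq S$. For the reverse, the plan is to exploit the fact, recalled just before the proposition, that a convex polytope equals the convex hull of its vertex set. Assuming $S$ is finite (the only case arising in the applications), $\conv(S)$ is a polytope; let $V$ denote its vertex set. Since $\conv(S)$ is a bounded polytope, the Minkowski/Krein--Milman argument in finite dimensions gives $V \subseteq S$.

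I would then prove $V = S'$ by double inclusion. If $v \in V$, then $v$ is extreme in $\conv(S)$, so it cannot be written as a non-trivial convex combination of other points of $\conv(S)$; in particular $v \notin \conv(S \setminus \{v\})$, hence $v \in S'$. Conversely, if $s \in S'$ but $s \notin V$, then $s \in \conv(V) \subseteq \conv(S \setminus \{s\})$, contradicting the definition of $S'$; hence $s \in V$. With $V = S'$ established, we conclude $\conv(S) = \conv(V) = \conv(S')$. The most delicate point is the inclusion $V \subseteq S$, which is standard for finite $S$ but would require additional care if $S$ were infinite; the finiteness assumption is justified by the applications treated in the paper.
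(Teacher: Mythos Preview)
The paper does not actually give a proof of this proposition; it presents the two items as immediate ``observations'' following the remarks that $\Lambda(G)$ is a convex polytope (citing \cite{service2023polytope}) and that a convex polytope equals the convex hull of its vertices (citing \cite{ziegler2012lectures}). Your argument supplies precisely the details the paper leaves implicit, and along the same line of reasoning: convexity of $\Lambda(G)$ for part~(1), and the vertex characterization of a polytope for part~(2).

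One small comparative remark: for part~(1) you verify convexity of $\Lambda(G)$ directly from Definition~\ref{def:RandSRR}, whereas the paper simply cites it. Your direct check is correct and makes the argument self-contained. For part~(2), you correctly flag that the statement as written requires $S$ to be finite (indeed, for $S$ an open disk one gets $S'=\emptyset$, so the claim fails), and you justify restricting to finite $S$ by appeal to the applications. This caveat is appropriate; the paper's subsequent notation, which declares that $S$ will henceforth be taken in reduced form and treats only finite $S$ in all examples and results, confirms that finiteness is the intended setting.
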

\begin{notation}
    In the sequel, $S \subseteq \R^k$ will always denote a nonempty set in reduced form. Following the notation of 
Proposition~\ref{prop:reduced},
we call $S'$
the \textbf{reduced form} of $S$. 

\end{notation}

In practice, we expect to know the demand vectors, and one needs to design systems (i.e., $G$) whose service rate region polytopes contain these demand vectors. Many systems can be chosen~(see, for instance, Figure \ref{fig:cool}), and 
the most efficient one will be selected (here ``efficiency" depends on the needs of the user). To this end, we study the following parameters, whose importance has been explained in Sec.~ \ref{sec:intro}.

\begin{definition}
\label{def:ourproblems}
When $q$ is fixed, we define
$$n_q(S) = \min\left\{n \in \Z_{\ge k} \mid \exists G \in \F_q^{k \times n} \mbox{ with } S \subseteq \Lambda(G)\right\}$$ and when $n$ is fixed, we define
$$q_n(S) = \min\left\{q \in \mP \mid \exists G \in \F_{q}^{k \times n} \mbox{ with } S \subseteq \Lambda(G) \right\}.$$
\end{definition}
The quantity $n_q(S)$ is a performance metric, and the minimum in its definition always exists since we can always use replication, resulting in many servers ($n$) compared to the optimal value $n_q(S)$.

\begin{example}
\label{ex:cool}
Let $S = \{(1,1),(2,0)\}$. We have $n_2(S)=3$. There are different polytopes $\Lambda(G)$ that contain $\conv(S)$ with~$G \in \F_2^{2 \times 3}$, see Figure \ref{fig:cool} for three of them.
\end{example}

While the number $n_q(S)$ always exists,  the existence of~$q_n(S)$ is not guaranteed.
\begin{example}
Let $S = \{(2,1),(1,2)\}$. We have $n_2(S) = 4$. The following argument shows that $q_3(S)$ does not exist. Regardless of the field size, one needs to fully use at least three servers for only one of the points in $S$. However, the servers used for the first point cannot support the remaining point. Therefore, there must be at least 4 servers.
\end{example}

\section{Methods and Tools}
\label{sec:methods}
We now propose a 
characterization of $n_q(S)$ that will allow us to apply mixed integer optimization to compute it.

\begin{remark}
\label{rem:expansion}
Let $q \in \mP$. Thus, $q=p^r$ for some prime~$p$ and positive integer $r$. Let $z$ denote a root of an irreducible polynomial in $\F_p[x]$ of degree $r$ in $\F_q$. We represent $\F_q$ as $\F_{p}^r$ using the function~$g : \F_q \rightarrow \F_p^r$, where $g(z^{j-1}) = e_{r+1-j}$ for all~$j \in \{1,\ldots,r\}$. For $h \in \F_q$, let $\overline{h} \in \{0,\ldots,q-1\}$ be the integer whose $p$-ary expansion is equal to $g(h)$. We then have 
$\Z_q = \{0,1,\ldots,q-1\}=\{\overline{h} \mid h \in \F_q\}.$

Let ${\bf{0}}_k$ be the all-zeros vector in $\Z_q^k$. We denote by $\overline{v_j} \in \Z_q^k \setminus {\bf{0}}_k$ the vector corresponding to the $q$-ary expansion of the integer $j \in \{1,\ldots,q^k-1\}$. Lastly, for all~$\overline{v_j} \in \Z_q^k \setminus {\bf{0}}_k$, we construct $v_j \in \F_q^k \setminus {\bf{0}}_k$ such that the $i$-th coordinate of $v_j$ is equal to $h \in \F_q$ when the $i$-th coordinate of $\overline{v_j}$ is $\overline{h} \in \Z_q$ for each $i \in \{1,\ldots,k\}.$ 

We also define following two sets:
\begin{align*}
D(H) &= \left\{d \in \left\{1,\ldots,\frac{q^k-1}{q-1}\right\} \ :\ u_d\in PG(k-1,q) \setminus H\right\}, \\
I(H) &= \{i \in \{1,\ldots,k\} \ :\ e_i \in PG(k-1,q) \setminus H\}.
\end{align*} 
for a hyperplane $H$ of $PG(k-1,q)$. In words, $D(H)$ and $I(H)$ are the set of projective points and standard basis vectors that do not lie on $H$, respectively.

\end{remark}
\noindent Using the representation of Remark \ref{rem:expansion} we get the following:

\begin{proposition}
\label{prop:nqS}
We have
\begin{multline*} 
    n_q(S) = \min\Biggl\{\sum_{j=1}^{q^k-1} n_j \mid \mbox{ there exists a matrix } G \\ \mbox{over $\F_q$ with $k$ rows, } |\{\nu \mid G^\nu = v_j\}| = n_j  \\ \mbox{ for all $j$, and } S \subseteq \Lambda(G) \Biggr\}.
\end{multline*}
\end{proposition}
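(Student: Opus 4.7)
The plan is to identify each matrix $G \in \F_q^{k \times n}$ (with no zero column) with its column-multiplicity vector $(n_1, \ldots, n_{q^k-1}) \in \Z_{\geq 0}^{q^k-1}$, where $n_j$ counts how many columns of $G$ equal the nonzero vector $v_j \in \F_q^k$ enumerated in Remark \ref{rem:expansion}. Since $\sum_j n_j$ equals the column count of $G$, minimizing $\sum_j n_j$ over multiplicity vectors that give rise to some matrix $G$ with $S \subseteq \Lambda(G)$ will coincide with $n_q(S)$ from Definition \ref{def:ourproblems}, provided that $\Lambda(G)$ depends only on the multiset of columns of $G$ and not on their order.

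The preliminary step is therefore to verify this column-permutation invariance. If $\sigma$ is a permutation of $\{1,\ldots,n\}$ and $G'$ is obtained from $G$ by applying $\sigma$ to the columns, then by Definition \ref{def:RandSRR} we have $\mR_i(G') = \{\sigma^{-1}(R) \st R \in \mR_i(G)\}$, since the minimal spanning condition depends only on the linear span of the selected columns. The reparametrization $\lambda_{i,R} \mapsto \lambda_{i,\sigma^{-1}(R)}$ of the auxiliary variables then shows that the constraints \eqref{srr1} and \eqref{srr2} are preserved, so $\Lambda(G') = \Lambda(G)$.

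With this symmetry in hand, both inequalities are immediate. For the upper bound, any feasible multiplicity vector $(n_j)$ yields a matrix $G$ by placing $n_j$ copies of $v_j$ in any order, with $n = \sum_j n_j$ columns and $S \subseteq \Lambda(G)$; hence $n_q(S) \leq \sum_j n_j$. For the lower bound, let $G^* \in \F_q^{k \times n_q(S)}$ attain the minimum in Definition \ref{def:ourproblems} (the feasible set is nonempty by the replication argument noted after Example \ref{ex:cool}), and set $n_j^* := |\{\nu \st (G^*)^\nu = v_j\}|$; this yields a feasible multiplicity vector with $\sum_j n_j^* = n_q(S)$. The argument is ultimately a change of variables, and no step is technically hard; the only substantive point is the permutation invariance of $\Lambda(G)$, which must be spelled out explicitly to justify passing freely between a matrix and its multiset of columns.
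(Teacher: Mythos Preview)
Your argument is correct. The paper does not actually give a proof of this proposition: it is stated as a direct consequence of the representation in Remark~\ref{rem:expansion}, with proofs omitted due to space constraints. What you have written is precisely the natural justification the paper leaves implicit, namely that $\Lambda(G)$ depends only on the multiset of columns of $G$, so that a matrix with $k$ rows and no zero column is determined up to an irrelevant column permutation by its multiplicity vector $(n_1,\ldots,n_{q^k-1})$, and $\sum_j n_j$ then equals the number of columns.
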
 

% Note that the matrices $G$ in Proposition~\ref{prop:nqS} 
% have $|\col(G)| = \sum_{i=1}^{q^k-1} n_i$.

%%%%%%%%%%%%%%%%%%%%%%%%%%%%%%%%%%%%%%%%%%%%%%%%%%%%%%%%%%%%%%%%%%%%%%%%%%%
%%%%%%%%%%%%%%%%%%%%%%%%%%%%%%%%%%%%%%%%%%%%%%%%%%%%%%%%%%%%%%%%%%%%%%%%%%%

We continue by translating the concept of recovery sets under the representation of Remark~\ref{rem:expansion}.

\begin{definition}
\label{def:minrec_updated}
Let $i \in \{1,\ldots,k\}$. A set $T \subseteq \{1,\ldots, q^k-1\}$ is called a \textbf{recovery set} for the $i$-th object if~$e_i \in \langle v_j \mid j \in T\rangle_{\F_q}$ and there is no $T' \subsetneq T$ with $e_i \in \langle v_j \mid j \in T'\rangle_{\F_q}$. We denote the set of recovery sets of the $i$-th object by $\mT_i$.
Note that~$|\mT_i| = |\mT_j|$ for all $i,j \in \{1,\ldots,k\}$.
\end{definition}

We illustrate the concepts introduced in Remark~\ref{rem:expansion} and Definition~\ref{def:minrec_updated} with an example.

\begin{example}
\label{ex:recovery}
Let $k=2$ and $q=3$. Then $v_1 = (0,1)$,~$v_2 = (0,2)$, $v_3 = (1,0)$, $v_4 = (1,1)$, $v_5 = (1,2)$, $v_6 = (2,0)$, $v_7 = (2,1)$, $v_8 = (2,2)$. Moreover, we have
\begin{multline*}
\mT_1 = \{\{1,8\},\{2,8\},\{3\},\{1,4\},\{2,4\},\{5,8\},\{1,5\},\\\{2,5\}, \{4,5\},\{6\},\{7,8\},\{1,7\},\{2,7\},\{4,7\}\},    
\end{multline*}
\begin{multline*}
\mT_2 =\{\{1\},\{2\},\{3,8\},\{3,4\},\{5,8\},\{3,5\},\{4,5\},\\\{6,8\},\{4,6\},\{5,6\},\{7,8\},\{3,7\},\{4,7\},\{6,7\}\}.    
\end{multline*}
\end{example}

The following result 
gives a sufficient condition for a demand vector to be supported under the terminology of Remark~\ref{rem:expansion}.
It can be applied to 
compute $n_q(S)$ for any set $S$ of finite cardinality using, for instance, mixed integer optimization; see~\cite{schrijver1998theory}.
Any feasible solution
of the program
gives an upper bound on $n_q(S)$.

\begin{theorem} 
\label{thm:system}
Assume that there exists a collection of non-negative integers $n_j$ for $j \in \{1,\ldots,q^k-1\}$ and a collection of non-negative real numbers $\{\theta_{i,T} \mid i \in \{1,\ldots,k\}, \, T \in \mT_i\}$
with the following properties:
\begin{empheq}[left = \empheqlbrace]{alignat=3}
        \sum_{T \in \mT_i}\theta_{i,T} &= \lambda_i &&\textnormal{ for } 1\leq i\leq k, \\
        \sum_{i=1}^k \sum_{\substack{T \in \mT_i \\ j \in T}} \theta_{i,T} &\leq n_j  \ &&\textnormal{ for } 1\leq j \leq q^k-1.
    \end{empheq}
There exists a system with $n = \sum_{j=1}^{q^k-1} n_j$ servers that supports (see Definition \ref{def:RandSRR}) the point $(\lambda_1,\ldots,\lambda_k) \in \R^k$.
\end{theorem}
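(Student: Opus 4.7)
The plan is to construct explicitly a matrix $G\in\F_q^{k\times n}$ from the data and exhibit non-negative weights $\lambda_{i,R}$ witnessing $(\lambda_1,\ldots,\lambda_k)\in \Lambda(G)$ via Definition~\ref{def:RandSRR}. First, I would build $G$ column by column: for each $j\in\{1,\ldots,q^k-1\}$ include exactly $n_j$ columns equal to $v_j$, for a total of $\sum_j n_j = n$ columns. Note that the second hypothesis forces $\theta_{i,T}=0$ whenever some $j\in T$ has $n_j=0$, so only those $T$ whose indices all appear as columns of $G$ need to be considered.

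Next, I would set up a correspondence between abstract recovery sets $T=\{j_1,\ldots,j_s\}\in\mT_i$ and the minimal recovery sets $R\in\mR_i(G)$. A minimal $R\in\mR_i(G)$ cannot contain two servers storing the same vector (otherwise one would be redundant, contradicting minimality), so $R$ determines a well-defined set of distinct indices $T(R)\in\mT_i$. Conversely, any $T\in\mT_i$ with all $n_{j_\ell}\geq 1$ gives rise to exactly $\prod_{\ell=1}^{s} n_{j_\ell}$ minimal recovery sets in $\mR_i(G)$, obtained by choosing one of the $n_{j_\ell}$ servers of each vector type $v_{j_\ell}$; minimality of each such $R$ is inherited directly from the minimality of $T$.

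Then I would set $\lambda_{i,R}:=\theta_{i,T(R)}/\prod_{\ell=1}^{s} n_{j_\ell}$ for each minimal $R$ (and zero for any $R$ not of this form) and verify \eqref{srr1} and \eqref{srr2}. For \eqref{srr1}, grouping the $\prod_\ell n_{j_\ell}$ sets $R$ attached to a given $T$ recovers $\theta_{i,T}$, and summing over $\mT_i$ gives $\lambda_i$ by the first hypothesis. For \eqref{srr2}, a server of vector type $v_j$ lies in $\prod_{\ell\neq\ell^*} n_{j_\ell}$ of the sets $R$ coming from a given $T\ni j$ (with $j_{\ell^*}=j$), so its total load is $\sum_i\sum_{T\ni j}\theta_{i,T}/n_j \leq 1$ by the second hypothesis.

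The main obstacle I anticipate is the correspondence step: verifying that the normalization $\theta_{i,T}/\prod_\ell n_{j_\ell}$ is correct and that minimality transfers in both directions between $\mT_i$ and $\mR_i(G)$. Both points rely on the observation that within a minimal recovery set all columns must carry distinct vectors, which is exactly what makes the factor $\prod_\ell n_{j_\ell}$ appear symmetrically on the two sides of the double-counting argument that yields \eqref{srr2}.
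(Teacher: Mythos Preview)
The paper omits the proof of this theorem (it is among those deferred to the extended version due to space constraints), so there is no argument on record to compare against. Your construction---build $G$ with $n_j$ copies of each $v_j$ and split each $\theta_{i,T}$ equally over the $\prod_\ell n_{j_\ell}$ minimal recovery sets in $\mR_i(G)$ lying above $T$---is correct, and the double-counting you describe verifies \eqref{srr1} and \eqref{srr2} cleanly. The only minor point worth a remark is the paper's standing assumption that $G$ has rank~$k$: whenever $\lambda_i>0$ your hypotheses already force $e_i$ into the column span of the constructed $G$, and when some $\lambda_i=0$ the constraint \eqref{srr1} for that index is satisfied vacuously, so the conclusion holds either way.
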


The conditions in Theorem~\ref{thm:system} can be regarded as constraints of an optimization problem that minimizes the number of servers of a system such that the system supports the given list of demand vectors. We illustrate how to treat two demand vectors with an example focusing on a particularly tractable input set for convenience of exposition. This illustrates how Theorem~\ref{thm:system} is applied.

\begin{example}
\label{ex:S2}
Let $a,b \in \Z$ 
with $a \ge b > 0$, and~$S = \{(a,0),(0,b)\}$. We want to compute $n_2(S)$
using Theorem~\ref{thm:system} combined with optimization.
We have $v_1 = (0,1)$, $v_2 = (1,0)$, $v_3 = (1,1)$, $\mT_1 =\{\{1,3\},\{2\}\}$, and~$\mT_2 =\{\{1\},\{2,3\}\}$. By applying Theorem \ref{thm:system} to both demand vectors in $S$ we obtain that $n_2(S)$ equals the optimal value of the following integer linear optimization program (ILP). Here we slightly abuse the notation (instead of following the notation of Theorem \ref{thm:system}) to emphasize which variable of ILP contributes to which point of $S$: 
\begin{equation*}
\begin{array}{ll@{}ll}
\text{minimize}  & n_1 + n_2 + n_3 & &\\
\text{subject to}& 
s^1_{1,\{1,3\}} + s^1_{1,\{2\}} = a,  &  & \\
                 & s^2_{2,\{1\}} + s^2_{2,\{2,3\}} = b,  &  & \\
                 & s^1_{1,\{1,3\}}  \le n_1,\  s^1_{1,\{2\}} \le n_2,\ 
                 s^1_{1,\{1,3\}} \le n_3,&  & \\  
                 & s^2_{2,\{1\}} \le n_1,\  s^2_{2,\{2,3\}} \le n_2,\   s^2_{2,\{2,3\}} \le n_3,  &  & \\
                 & s^1_{1,\{1,3\}},\ s^1_{1,\{2\}},\ s^2_{2,\{1\}},\ s^2_{2,\{2,3\}} \in \R_{\ge 0},   &  & \\
                 & n_1,\ n_2,\ n_3 \in \Z_{\ge 0}.  &  & \\
\end{array}
\end{equation*}

We now point out a key observation on the role of the number of coded servers, namely $n_3$. We must have $n_3 \le n_1$ and $n_3 \le n_2$. This  follows from the fact $\{3\} \notin \mT_1 \cup \mT_2$, but~$\{1,3\} \in \mT_1$ and $\{2,3\} \in \mT_2$. Since $a \ge b$ by assumption, one can further assume that $n_2 \ge n_1 \ge n_3$. Thus, the problem reduces to: 
\begin{equation*}
\begin{array}{ll@{}ll}
\text{minimize}  & n_1 + n_2 + n_3 & &\\
\text{subject to}& n_2 + n_3 \ge a,  &  & \\
                 & n_1 + n_3 \ge b,  &  & \\
                 & n_2 - n_1 \ge 0,  &  & \\
                 & n_1 - n_3 \ge 0,  &  & \\
                 & n_1,\ n_2,\ n_3 \in \Z_{\ge 0}.   &  & \\
\end{array}
\end{equation*}

One feasible solution is$$(n_1,n_2,n_3) = (\ceil{b/2}, a - \floor{b/2},\floor{b/2}).$$ Therefore, we have $n_2(S) \le \ceil{a + b/2}$. 
We will later show that $n_2(S)\ge \ceil{a + b/2}$ by
Theorem \ref{thm:lb},
which in turn proves that 
$n_2(S) = a+\ceil{b/2}$. 
\end{example}

The following result
uses projective points 
and has been 
proven 
in \cite{kazemi2021efficient} for $q=2$ with 
slightly different notations.
We state it here
for arbitrary $q$. The proof is essentially the same.
We will apply later in Theorem \ref{thm:lb} to derive a lower bound on~$n_q(S)$ for some sets $S$ with a particular structure.

\begin{theorem}
\label{thm:geometry}
Let $S = \{s^1,\ldots,s^m\} \subseteq \R^k$. Let $G \in \F_q^{k \times n_q(S)}$ and let $H$ be a hyperplane of $\PG(k-1,q)$. Following the notation of Remark \ref{rem:expansion}, if $|\{\nu \mid G^\nu = v_d\}| = n_d$, then
$$\sum_{d \in D(H)} n_d \ge \Bigl\lceil \max \Bigl \{ \sum_{i \in I(H)} s^t_i \ : \ 1 \le t \le m \Bigr\}\Bigr\rceil.$$ 
\end{theorem}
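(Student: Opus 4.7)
\smallskip
\noindent\textbf{Proof proposal.} The plan is to dualize the hyperplane $H$ to a nonzero linear form $\ell : \F_q^k \to \F_q$ with $\ker \ell$ cutting out $H$ in $\PG(k-1,q)$, and then to sum the server-capacity constraints of Definition~\ref{def:RandSRR} only over the servers whose columns lie off $H$. The right-hand side of the claim counts exactly these servers (up to scalar grouping), so the inequality will follow from the observation that recovering any basis vector $e_i$ with $\ell(e_i)\neq 0$ necessarily costs some amount of capacity on such a server.

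First I would fix $t \in \{1,\ldots,m\}$ and use $s^t \in \Lambda(G)$ to extract coefficients $\lambda_{i,R} \in \R_{\ge 0}$ (for $i \in \{1,\ldots,k\}$ and $R \in \mR_i(G)$) satisfying \eqref{srr1} and \eqref{srr2}. The key structural step is to prove the following claim: for every $i \in I(H)$ and every $R \in \mR_i(G)$, there exists $\nu \in R$ with $G^\nu \notin H$ (i.e.\ $\ell(G^\nu) \neq 0$). This is immediate: if every column indexed by $R$ lay in $\ker\ell$, then $e_i \in \langle G^\nu : \nu \in R\rangle_{\F_q}$ would force $\ell(e_i) = 0$, contradicting $i \in I(H)$.

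Next, I would sum the capacity constraint \eqref{srr2} only over the servers $\nu$ with $G^\nu \notin H$. On one hand, the sum is bounded above by the number of such servers, which in the notation of the statement equals $\sum_{d \in D(H)} n_d$ (each projective point $u_d$ off $H$ accounting for all columns that are scalar multiples of a fixed representative). On the other hand, restricting the double sum to pairs $(i,R)$ with $i \in I(H)$ and using the claim above, each term $\lambda_{i,R}$ with $i \in I(H)$ is counted at least once (because its recovery set $R$ contains at least one $\nu$ with $G^\nu \notin H$). Hence
\begin{equation*}
\sum_{d \in D(H)} n_d \;\ge\; \sum_{i \in I(H)} \sum_{R \in \mR_i(G)} \lambda_{i,R} \;=\; \sum_{i \in I(H)} s^t_i,
\end{equation*}
where the final equality uses \eqref{srr1}. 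Since $t$ was arbitrary and the left-hand side is a nonnegative integer, taking the maximum over $t$ and then the ceiling yields the desired bound.

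The main obstacle is purely notational rather than conceptual: Remark~\ref{rem:expansion} distinguishes between the affine representatives $v_j$ (indexed over $\{1,\ldots,q^k-1\}$) and the projective points $u_d$ (indexed over $\{1,\ldots,(q^k-1)/(q-1)\}$), and the statement tacitly groups together all $v_j$ lying on the same projective line through the origin. I would make this grouping explicit at the start so that the counting identity $|\{\nu : G^\nu \notin H\}| = \sum_{d \in D(H)} n_d$ is unambiguous; once this bookkeeping is in place, the rest of the argument is a one-line application of \eqref{srr1}--\eqref{srr2}.
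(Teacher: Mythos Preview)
Your argument is correct and is precisely the standard hyperplane argument: every minimal recovery set for an $e_i$ with $e_i\notin H$ must touch at least one server outside $H$, so summing the capacity constraints \eqref{srr2} over those servers lower-bounds their count by $\sum_{i\in I(H)} s^t_i$. The paper does not spell out a proof here but defers to \cite{kazemi2021efficient} (stated there for $q=2$), and the reasoning you give is essentially that proof extended verbatim to arbitrary $q$; your closing remark about the affine/projective indexing mismatch is also well taken and worth making explicit.
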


%%%%%%%%%%%%%%%%%%%%%%%%%%%%%%%%%%%%%%%%%%%%%%%%%%%%%%%%%%%%%%%%%%%%%%%%%%%
%%%%%%%%%%%%%%%%%%%%%%%%%%%%%%%%%%%%%%%%%%%%%%%%%%%%%%%%%%%%%%%%%%%%%%%%%%%

\section{Upper and Lower Bounds}
\label{sec:bound}
This section focuses on bounds, existence results, and the interplay of the two parameters introduced in Definition~\ref{def:ourproblems}. 

We start with the simplest possible case, $|S|=1$. Here, we have to cover a single demand vector and determine the minimum number of servers and field size for which this is possible.

\begin{proposition}
\label{thm:s=1}
Let $s \in \R_{\ge 0}^k$ and $S = \{s\}$. For any $q$, we have $$n_q(S) = \sum_{i=1}^k \left\lceil s_i\right\rceil.$$ 
Furthermore, 
$$q_n(S) = \begin{cases}
        \textnormal{does not exist} & \textnormal{ if } n < \sum_{i=1}^k \left\lceil s_i\right\rceil,\\
        2 & \textnormal{ otherwise.}
    \end{cases}$$
\end{proposition}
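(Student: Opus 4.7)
The plan is to handle both claims of the proposition simultaneously, using a single binary replication code for the upper bound and Theorem~\ref{thm:geometry} (applied to every hyperplane of $\PG(k-1,q)$) for the matching lower bound.

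First, to establish $n_q(S) \le n_0 := \sum_{i=1}^k \lceil s_i \rceil$ for every $q$, and $q_n(S) \le 2$ whenever $n \ge n_0$, I would exhibit the binary replication matrix $G_0 \in \F_2^{k \times n_0}$ whose columns consist of $\lceil s_i \rceil$ copies of $e_i$ for every $i$ with $s_i > 0$. All minimal recovery sets of $G_0$ are singletons, and setting $\lambda_{i,\{\nu\}} = s_i / \lceil s_i \rceil$ on each copy of $e_i$ (and $0$ otherwise) satisfies both \eqref{srr1} and \eqref{srr2}, so $s \in \Lambda(G_0)$. When $n > n_0$, I would simply append $n - n_0$ further copies of $e_1$ as dummy servers to obtain a binary matrix of length $n$ whose service rate region still contains $s$.

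For the matching lower bound $n_q(S) \ge n_0$, fix any $G \in \F_q^{k \times n}$ supporting $s$ and apply Theorem~\ref{thm:geometry} with the coordinate hyperplane $H_i = \{x \in \PG(k-1,q) : x_i = 0\}$. Since $I(H_i) = \{i\}$ and $|S|=1$, this immediately gives that the number of columns of $G$ with nonzero $i$-th coordinate is at least $\lceil s_i \rceil$. The main step, and the main obstacle, is to promote these $k$ single-coordinate inequalities into the sum bound $n \ge n_0$. My plan is to apply Theorem~\ref{thm:geometry} to \emph{every} hyperplane $H$ of $\PG(k-1,q)$ simultaneously, obtaining on the multiplicity vector $(n_d)$ of Proposition~\ref{prop:nqS} the system
\[
\sum_{d \in D(H)} n_d \ \ge\ \Bigl\lceil \sum_{i \in I(H)} s_i \Bigr\rceil \qquad \text{for every hyperplane } H,
\]
and then to show that the integer program that minimizes $\sum_d n_d$ subject to this system has optimum exactly $n_0$, achieved by the replication profile ($n_d = \lceil s_i \rceil$ on the standard basis direction $e_i$ and $0$ elsewhere). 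The subtle point is that the LP relaxation of this integer program gives only $\sum_i s_i$ in general, so the $n_0$ bound must genuinely use the integrality of the $n_d$'s together with the ceilings on the right-hand side; the hyperplane $H^* = \{\sum_i x_i = 0\}$, for which $I(H^*) = \{1,\dots,k\}$, cooperates with the coordinate hyperplanes to transfer fractional slack into discrete column counts, and I expect the verification to proceed by a case analysis on which $s_i$ are integral.

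Finally, the $q_n(S)$ part is immediate from the other two: when $n \ge n_0$ the binary replication code from the first step realises $q_n(S) = 2$, while when $n < n_0$ the lower bound just established holds uniformly over all prime powers $q$, forbidding the existence of any supporting matrix over any field, so $q_n(S)$ does not exist.
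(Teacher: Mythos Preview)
Your upper-bound construction is fine, as is the final deduction of the $q_n(S)$ statement from the $n_q(S)$ statement. The problem lies entirely in the lower bound $n\ge\sum_i\lceil s_i\rceil$, and here your proposal does not close the gap you yourself identify.

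Applying Theorem~\ref{thm:geometry} to the coordinate hyperplanes gives only that the number $N_i$ of columns with nonzero $i$-th entry satisfies $N_i\ge\lceil s_i\rceil$; since the $N_i$ overlap, no summation of these alone yields $n\ge\sum_i\lceil s_i\rceil$. Your remedy is to throw in \emph{all} hyperplane inequalities and analyse the resulting integer program, but you then stop at ``I expect the verification to proceed by a case analysis''. That is not a proof, and the most natural way to extract a bound from that system --- summing all inequalities and dividing, exactly as in the proof of Theorem~\ref{thm:lb} --- is provably too weak. For instance, take $k=3$, $q=2$, $s=(1.5,1.5,1.5)$: the seven hyperplane inequalities have right-hand sides $2,2,2,3,3,3,5$ (according as $|I(H)|=1,2,3$), their sum is $20$, and since each projective point lies outside exactly four lines one gets only $4n\ge20$, i.e.\ $n\ge5$, whereas $\sum_i\lceil s_i\rceil=6$. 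The full ILP over the seven inequalities does force $n\ge6$ in this example, but establishing that the ILP optimum equals $\sum_i\lceil s_i\rceil$ for \emph{every} $k$, $q$ and $s$ is a substantial claim for which you give no argument. The hyperplane $H^*=\{\sum_i x_i=0\}$ you single out contributes only $\lceil\sum_i s_i\rceil$, which again falls short.

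The paper does not print a proof of this proposition (it is deferred to an extended version), but note that it is exactly the case $|S|=1$ of Theorem~\ref{thm:sandwich}, where $\alpha(S)=\beta(S)=\sum_i\lceil s_i\rceil$ coincide. A direct route to the lower bound, bypassing the hyperplane ILP entirely, is to track for each object $i$ the split $s_i=a_i+b_i$ into rate served by singleton recovery sets ($a_i\le m_i$, the number of columns proportional to $e_i$) and by recovery sets of size $\ge2$; the latter must each contain a \emph{coded} column (a set of systematic columns for indices $\ne i$ cannot span $e_i$), so the $m_0$ coded servers absorb load at least $\sum_i b_i$, while the total load is at least $\sum_i(s_i+b_i)$. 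Combining these capacity constraints with the integrality of the $m_i$ and $m_0$ yields $n\ge\sum_i\lceil s_i\rceil$ after a short case distinction --- this is the ``missing idea'' your plan lacks.
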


The case where $|S|=2$ is significantly more involved then~$|S|=1$ and it will appear 
in the extended version of this work.
We continue with general upper and lower bounds on $n_q(S)$ for a set $S$ of any cardinality.

\begin{theorem}
\label{thm:sandwich}
Let $S = \{s^1,\ldots,s^m\} \subseteq \R^k$. We have $$\alpha(S) \le n_q(S) \le \beta(S),$$ where 
\begin{align*}
\alpha(S) &= \max\left\{\sum_{i=1}^k \left\lceil s^t_i\right\rceil \mid 1 \le t \le m \right\}, \\
\beta(S) &= \sum_{i=1}^k \max\left\{\left\lceil s^t_i\right\rceil \mid 1 \le t \le m \right\}.
\end{align*}
\end{theorem}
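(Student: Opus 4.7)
The plan is to prove the two inequalities independently, in both cases reducing to results already established in the paper.

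For the lower bound $\alpha(S) \le n_q(S)$, I would exploit the monotonicity of $n_q$ with respect to inclusion of inputs. If $G \in \F_q^{k \times n}$ satisfies $S \subseteq \Lambda(G)$, then in particular $\{s^t\} \subseteq \Lambda(G)$ for every $t \in \{1,\ldots,m\}$, hence $n \ge n_q(\{s^t\})$. Proposition \ref{thm:s=1} gives $n_q(\{s^t\}) = \sum_{i=1}^k \lceil s_i^t \rceil$. Taking the maximum over $t$ and then the infimum over admissible $G$ yields $n_q(S) \ge \max_t \sum_{i=1}^k \lceil s_i^t \rceil = \alpha(S)$.

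For the upper bound $n_q(S) \le \beta(S)$, I would exhibit an explicit system based on pure replication. Set $m_i := \max\{\lceil s_i^t \rceil : 1 \le t \le m\}$ for each $i \in \{1,\ldots,k\}$, and let $G$ be the $k \times \beta(S)$ matrix whose columns consist of $m_i$ copies of the standard basis vector $e_i$ for each $i$. Since every column of $G$ is one of the $e_i$, the minimal recovery sets for the $i$-th object are exactly the singletons $\{\nu\}$ with $G^\nu = e_i$. To verify $s^t \in \Lambda(G)$ for each fixed $t$, I would assign $\lambda_{i,\{\nu\}} := s_i^t / m_i$ for every such singleton (interpreting $0/0 = 0$ when $m_i = 0$, which forces $s_i^t = 0$). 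Constraint \eqref{srr1} then holds by construction, while \eqref{srr2} reduces to $s_i^t / m_i \le 1$, which follows from $s_i^t \le \lceil s_i^t \rceil \le m_i$. Since each $s^t$ is individually supported, $S \subseteq \Lambda(G)$, and $G$ has exactly $\beta(S)$ columns.

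No step presents a substantive obstacle: the lower bound is a monotonicity argument using Proposition \ref{thm:s=1}, and the upper bound is a direct replication construction. The only minor bookkeeping is the degenerate case $m_i = 0$, handled as above. Observe also that the construction uses only $\{0,1\}$ entries, so neither bound depends on $q$, which reflects that the sandwich holds uniformly over $q \in \mP$.
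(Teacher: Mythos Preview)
Your proof is correct. The paper omits the proof of Theorem~\ref{thm:sandwich} (it is deferred to an extended version), so there is no paper argument to compare against; your approach---monotonicity plus Proposition~\ref{thm:s=1} for the lower bound, and a pure replication matrix for the upper bound---is the natural one and is consistent with how the paper treats the surrounding results.

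One minor caveat on the bookkeeping you flag: when some $m_i = 0$ your construction handles the $0/0$ allocation, but the resulting $G$ then fails the standing assumption $\mathrm{rk}(G)=k$ (since $e_i \notin \col(G)$). This edge case, however, is already present in Proposition~\ref{thm:s=1} itself (the formula there can drop below $k$, contradicting $n \ge k$), so it reflects an implicit nondegeneracy convention in the paper rather than a flaw in your argument.
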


Obtaining bounds on $q_n(S)$ appears to be
more difficult, in general, than obtaining bounds on $n_q(S)$.
 The quantities introduced in Theorem \ref{thm:sandwich} can, however, be used to prove non-existence results on $q_n(S)$.

\begin{proposition}
\label{prop:notexist}
The quantity $q_{\alpha(S)}(S)$ never exists if $|S| > 1$ (and $S$ is in reduced form).
In particular, following the notation of Theorem~\ref{thm:sandwich}, if $|S| > 1$ then
$n_q(S) \ge \alpha(S)+1$ for any $q$.
\end{proposition}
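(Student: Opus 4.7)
The plan is to prove the ``in particular'' half of the proposition, as $n_q(S) \geq \alpha(S)+1$ for every $q$ immediately implies the non-existence of $q_{\alpha(S)}(S)$. Combined with the lower bound $n_q(S) \geq \alpha(S)$ from Theorem~\ref{thm:sandwich}, it suffices to rule out equality by contradiction: assume some $G \in \F_q^{k \times n}$ with $n = \alpha(S)$ satisfies $S \subseteq \Lambda(G)$, and fix an index $t^*$ realizing $\sum_i \lceil s^{t^*}_i\rceil = n$.

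The heart of the argument is a capacity-accounting identity applied to a witness $\{\lambda_{i,R}\}$ of $s^{t^*} \in \Lambda(G)$. Summing the demand equations \eqref{srr1} over $i$ gives $\sum_{i,R}\lambda_{i,R} = \sum_i s^{t^*}_i$, while summing the server constraints \eqref{srr2} over $\nu$ gives $\sum_{i,R}|R|\lambda_{i,R} \leq n$. In the integer-demand regime one has $\sum_i s^{t^*}_i = \sum_i \lceil s^{t^*}_i\rceil = n$, so the two estimates collide and, since $|R| \geq 1$, force equality throughout. Consequently every active recovery set is a singleton and every server is loaded to exactly $1$. Because $\{\nu\} \in \mR_i$ only when $G^\nu$ is a scalar multiple of $e_i$, this forces every column of $G$ to be systematic, and the multiplicity $n_i$ of the class $\langle e_i\rangle$ satisfies $n_i = s^{t^*}_i$.

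Next I would transfer this rigidity to any other $s^{t'} \in S$. Since the only recovery sets available in such a purely systematic $G$ are singletons, supporting $s^{t'}$ requires $s^{t'}_i \leq n_i = s^{t^*}_i$ for every $i$. Thus $s^{t^*}$ componentwise dominates $s^{t'}$, which contradicts the reduced-form hypothesis on $S$ in conjunction with $|S| > 1$.

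The main obstacle is twofold. First, the capacity identity depends on the sharp equality $\sum_i s^{t^*}_i = \alpha(S)$, which is automatic when $s^{t^*}$ has integer entries but acquires a positive slack for fractional demands, weakening the forcing of the systematic structure. Second, the final contradiction is crisp only under the Pareto-frontier interpretation of ``reduced form'' (no point is componentwise dominated by another), which is a natural strengthening given the down-monotonicity of $\Lambda(G)$, though slightly stronger than the convex-hull-based definition in Proposition~\ref{prop:reduced}. I would handle both by working in the integer-demand regime $S \subseteq \Z^k_{\geq 0}$ under the down-closed notion of reduced form, which is the operationally relevant setting; a fully general version would likely require additional combinatorial input, for instance a refinement of Theorem~\ref{thm:geometry} that directly accounts for the ceiling slack.
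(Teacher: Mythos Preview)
The paper omits the proof of this proposition (deferring it to an extended version), so there is no argument in the text to compare against. Your capacity-accounting argument is clean and correct under the two hypotheses you isolate, and your diagnosis of the obstacles is not a technicality to be patched later: neither hypothesis can be removed, because the proposition as literally stated is false.

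For non-integer demands, take $k=2$ and $S=\{(1.5,0.5),(0.5,1.5)\}$. Neither point lies in the convex hull of the other, so $S$ is in reduced form in the sense of Proposition~\ref{prop:reduced}; neither point dominates the other, so $S$ is even Pareto-reduced. Here $\alpha(S)=3$, yet for the binary simplex matrix
\[
G=\begin{pmatrix}1&0&1\\0&1&1\end{pmatrix}\in\F_2^{2\times 3}
\]
one has $\Lambda(G)=\{\lambda\in\R_{\ge 0}^2:\lambda_1+\lambda_2\le 2\}$, which contains both points. Hence $n_2(S)=3=\alpha(S)$ and $q_{\alpha(S)}(S)=2$ exists. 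For the reduced-form issue under integer demands, take $S=\{(1,1),(2,2)\}$: this is in reduced form per Proposition~\ref{prop:reduced} (each singleton convex hull misses the other point), $\alpha(S)=4$, and the pure replication matrix with two copies of each $e_i$ already supports $(2,2)$ and hence all of $S$, so again $n_q(S)=\alpha(S)$.

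Your restricted statement---$n_q(S)\ge\alpha(S)+1$ whenever $|S|>1$, $S\subseteq\Z_{\ge 0}^k$, and no point of $S$ is componentwise dominated by another---is therefore the right one, and your proof of it is complete as written. The forcing step (tight capacity $\Rightarrow$ only singleton recovery sets $\Rightarrow$ purely systematic $G$ with $n_i=s^{t^*}_i$) is exactly the mechanism one expects here; nothing further is needed.
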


In the remainder of this section, we focus on sets of the form~$S=\{X_ie_i \mid 1 \le i \le k\}$, for a given vector~$X \in \R^k_{\ge 0}$.
Note that the convex hull of such a set $S$ is a simplex polytope in the positive orthant, one of the best-known
polytopes.

The next result establishes a lower bound for the quantity~$n_q(S)$, which explicitly depends on the underlying field size~$q$. Its proof uses Theorem~\ref{thm:geometry}, which we stated as a preliminary result in Sec.~\ref{sec:methods}.

\begin{theorem}
\label{thm:lb}
Let $(X_i)_{i=1}^k$ be a non-increasing sequence of non-negative integers and let $S=\{X_ie_i \mid 1 \le i \le k\}.$
For all $q$, we have 
\begin{equation*}
\label{lowerbound}
n_q(S) \ge  \left \lceil \sum\limits_{i=1}^k  
q^{1-i}X_i  \right \rceil.
\end{equation*} 
\end{theorem}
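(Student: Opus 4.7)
The plan is to apply Theorem~\ref{thm:geometry} to every hyperplane of $\PG(k-1,q)$ and then aggregate the resulting inequalities by double counting. Since rescaling a column of~$G$ by a nonzero scalar preserves every span $\langle G^\nu \mid \nu \in R \rangle_{\F_q}$, and hence every minimal recovery set and the polytope $\Lambda(G)$, I may assume without loss of generality that each column of $G$ is a canonical projective representative. With this normalization, for each projective point $P \in \PG(k-1,q)$ let $n_P$ denote the number of columns of $G$ lying in the class $P$, so that $n_q(S) = \sum_{P} n_P$.

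For a fixed hyperplane $H$, the structure $s^t = X_t e_t$ gives $\sum_{i \in I(H)} s^t_i = X_t$ when $t \in I(H)$ and $0$ otherwise. Since $(X_i)$ is non-increasing and integer-valued, $\max_{t} \sum_{i \in I(H)} s^t_i = X_{\min I(H)}$, which is already an integer, so Theorem~\ref{thm:geometry} becomes
$$\sum_{P \notin H} n_P \;\ge\; X_{\min I(H)}.$$
Next I would sum this over every hyperplane $H$ of $\PG(k-1,q)$ and swap the order of summation on the left. The standard incidence identity that every projective point is avoided by exactly $q^{k-1}$ hyperplanes (out of a total of $(q^k-1)/(q-1)$) rewrites the aggregated left-hand side as $q^{k-1}\, n_q(S)$. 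For the right-hand side, the hyperplanes with $\min I(H) = i$ correspond to dual vectors $[a_1 : \cdots : a_k]$ with $a_1 = \cdots = a_{i-1} = 0$, $a_i \ne 0$, and $a_{i+1}, \ldots, a_k \in \F_q$ arbitrary; after normalizing $a_i = 1$, this leaves exactly $q^{k-i}$ such hyperplanes. Therefore
$$q^{k-1}\, n_q(S) \;\ge\; \sum_{i=1}^{k} q^{k-i}\, X_i,$$
and dividing by $q^{k-1}$ and invoking $n_q(S) \in \Z$ to reintroduce the ceiling recovers the claim.

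The step I expect to require the most care is the selection of hyperplanes over which to aggregate. Summing only over the coordinate hyperplanes $\{x_j = 0\}$, or over a small chain, is too wasteful and cannot produce the geometric decay $q^{1-i}$: each projective point would then be covered with the wrong multiplicity and the bound would collapse to $\max_i X_i$ or similar. It is precisely the full average over the dual projective space, combined with the monotonicity of $(X_i)$ that converts every local quantity $X_{\min I(H)}$ into the weighted sum $\sum_{i} q^{1-i} X_i$ via the clean count of hyperplanes with prescribed minimal coordinate support.
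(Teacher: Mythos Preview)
Your argument is correct and follows essentially the same route as the paper: apply Theorem~\ref{thm:geometry} to every hyperplane of $\PG(k-1,q)$, sum the resulting inequalities, and use the double count that each projective point is missed by exactly $q^{k-1}$ hyperplanes. Your enumeration of hyperplanes with $\min I(H)=i$ via the normalized dual vector $[0:\cdots:0:1:a_{i+1}:\cdots:a_k]$ is exactly the paper's count $\qbin{k-i+1}{1}{q}-\qbin{k-i}{1}{q}=q^{k-i}$, and your explicit column-normalization step cleanly justifies working at the level of projective points.
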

\begin{proof}
Let $\smash{\qbin{a}{b}{q}}$ denote the number of $b$-dimensional subspaces of an $a$-dimensional vector space over $\F_q$ for non-negative integers $a \ge b$; see \cite{andrews1998theory}.
We apply Theorem~\ref{thm:geometry} to all hyperplanes of $\PG(k-1,q)$. This gives $\smash{\qbin{k}{k-1}{q} = \qbin{k}{1}{q}}$ inequalities, since $\PG(k-1,q)$ has that many hyperplanes. We can represent these inequalities in a system 
\begin{equation} \label{ineqs}
A\bf{x} \ge \bf{b}
\end{equation}
for a matrix $A$ of suitable size and vectors $\bf{x}$ and $\bf{b}$ of length $\smash{\qbin{k}{1}{q}}$. Since $\smash{(X_i)_{i=1}^k}$ is a non-increasing sequence of integers, we have 
$$|\{d \mid b_d = X_i\}| = \qbin{k-i+1}{1}{q} - \qbin{k-i}{1}{q}$$ for all $1 \le i \le k$. Moreover, following the notation of Theorem~\ref{thm:geometry}, each $n_d$ occurs in exactly $|D(H)|$ inequalities.

Let $Z = \qbin{k-i+1}{1}{q} - \qbin{k-i}{1}{q}$ for the rest of the proof. Summing all the inequalities in~\eqref{ineqs} and dividing by~$|D(H)|$ gives
\allowdisplaybreaks
\begin{align*}
n_q(S) &\ge \frac{1}{|D(H)|}\sum_{i=1}^k Z\ceil{X_i} = \frac{1}{|D(H)|}\sum_{i=1}^k ZX_i \\
&= \frac{1}{\qbin{k}{1}{q} - \qbin{k-1}{1}{q}}\sum_{i=1}^k ZX_i  \\
&= X_1 + \frac{q-1}{q^k-q^{k-1}} \sum_{i=2}^k ZX_i \\
&= X_1 + \frac{X_2}{q} + \ldots + \frac{X_k}{q^{k-1}}.
\end{align*}
Taking the ceiling, we bound on the integer $n_q(S)$.
\end{proof}

When $q=2$, we can establish an upper bound for~$n_q(S)$, where $S$ has the same properties stated in Theorem~\ref{thm:lb}. The proof of the following result is constructive and will appear in the extended version of this work because of space constraints.

\begin{theorem} \label{thm:lb2}
Let $(X_i)_{i=1}^k$ be a non-increasing sequence of non-negative integers and $S=\{X_ie_i \mid 1 \le i \le k\}$. We have 
\begin{equation*}
n_2(S) \le  \sum\limits_{i=1}^k \left \lceil \frac{X_i}{2^{i-1}}  \right \rceil.
\end{equation*}
\end{theorem}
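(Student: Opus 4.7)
I would prove this by induction on $k$. The base case $k = 1$ is immediate: the matrix consisting of $X_1$ copies of $e_1 \in \F_2^1$ supports $\{X_1 e_1\}$ and has $X_1 = \lceil X_1/2^0\rceil$ columns.

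For the inductive step, set $Y_j = \lceil X_{j+1}/2\rceil$ for $j = 1, \ldots, k-1$. Then $(Y_j)_{j=1}^{k-1}$ is a non-increasing sequence of non-negative integers, and using the identity $\lceil \lceil a/2 \rceil / 2^{j-1} \rceil = \lceil a / 2^j \rceil$, the inductive hypothesis yields a matrix $G' \in \F_2^{(k-1) \times n'}$ supporting $\{Y_j e_j\}_{j=1}^{k-1}$ with
\begin{equation*}
n' \le \sum_{j=1}^{k-1} \left\lceil \frac{Y_j}{2^{j-1}} \right\rceil = \sum_{i=2}^k \left\lceil \frac{X_i}{2^{i-1}} \right\rceil.
\end{equation*}
The target column count for $G$ is therefore $X_1 + n' \le \sum_{i=1}^k \lceil X_i/2^{i-1}\rceil$.

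I would construct $G \in \F_2^{k \times (X_1 + n')}$ by lifting every column $c$ of $G'$ into $\F_2^k$ via the unshifted lift $(0, c\T)\T$ and the shifted lift $(1, c\T)\T = e_1 + (0,c\T)\T$. In the regime $X_1 \ge n'$, I would include both lifts of each column of $G'$ together with $X_1 - n'$ systematic copies of $e_1$, for a total of $X_1 + n'$ columns. The demand $X_1 e_1$ is served by the rate-$1$ recovery sets $\{(0,c\T)\T, (1,c\T)\T\}$ (one per column of $G'$) combined with the systematic $e_1$ columns. For a demand $X_i e_i$ with $i \ge 2$, the inductive rate-$Y_{i-1}$ allocation of $G'$ on $e_{i-1}$ lifts twice: each recovery set $R'$ for $e_{i-1}$ in $G'$ yields an unshifted recovery set for $e_i$ in $G$, and also a shifted recovery set when $|R'|$ is even; when $|R'|$ is odd, the shifted lift is supplemented by one systematic $e_1$ column to correct the parity bit. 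Summing over both lifts produces rate $2 Y_{i-1} \ge X_i$ on $e_i$, and checking the capacity constraint \eqref{srr2} against Definition~\ref{def:RandSRR} confirms the claim.

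The main technical obstacle is the regime $X_1 < n'$, where no systematic $e_1$ column is available as parity correction. There I would include only a carefully chosen subset of $X_1$ shifted lifts (keeping all $n'$ unshifted ones), selecting them so that the recovery sets of $G'$ used for each $e_{i-1}$ are either fully contained in the shifted subset with even cardinality, or can be completed by unshifted columns whose total parity matches. An alternative tactic is to strengthen the inductive hypothesis to guarantee that $G'$ admits a rate-$Y_{i-1}$ allocation using only even-size recovery sets whenever this is possible, thereby sidestepping the parity issue entirely. I expect this parity bookkeeping to be the most delicate part of the proof and anticipate that it drives the extended-version argument toward a case analysis or a stronger inductive invariant.
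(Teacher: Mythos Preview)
The paper does not actually prove Theorem~\ref{thm:lb2}; it only states that the argument is constructive and defers it to an extended version. Your inductive doubling-by-lifting scheme is a natural constructive route and is consistent with that hint, and you correctly isolate the parity bookkeeping as the heart of the matter.

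That said, the case split you draw is misleading: the regime $X_1 \ge n'$ is \emph{not} parity-free, and the sentence ``checking the capacity constraint~\eqref{srr2} \dots\ confirms the claim'' hides a real gap. Take $k=3$ and $X_1=X_2=X_3=2$. Then $Y_1=Y_2=1$, the inductive $G'\in\F_2^{2\times 2}$ may be $I_2$, so $n'=2=X_1$ and your construction has \emph{no} systematic $e_1$ column. To serve $2e_2$ you lift the unique recovery set $\{1\}$ of $e_1$ in $G'$: the unshifted lift gives rate~$1$, but the shifted lift $\{(1,1,0)^\top\}$ has odd size and your recipe asks for an unavailable $e_1$ column. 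The point $2e_2$ \emph{is} in $\Lambda(G)$ here (use $\{(1,1,0)^\top,(0,0,1)^\top,(1,0,1)^\top\}$), but not via the mechanism you describe; the parity correction must be manufactured from an unused unshifted/shifted pair, which in turn requires knowing that the $G'$-allocation leaves some column idle. More generally, whenever $X_1-n'$ is smaller than the total rate carried by odd-size recovery sets in the $G'$-allocation, your stated correction overloads the $e_1$ servers.

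So the difficulty you flag for $X_1<n'$ is already present at and near the boundary $X_1=n'$, and your two regimes should really be treated together. Of the two remedies you list, strengthening the inductive hypothesis is the cleaner one: carry along the extra information that the inductive matrix admits, for every coordinate, a rate-$Y_j$ allocation whose odd-size recovery sets carry total rate at most some controlled amount (or, even more simply, that all recovery sets except possibly a single systematic singleton can be taken of even size). With such an invariant the parity correction either disappears or costs at most one $e_1$ column per demand, which you always have once $X_1\ge 1$. Without it, the capacity check in the $X_1\ge n'$ case does not go through as written.
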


By combining Theorems~\ref{thm:lb} and~\ref{thm:lb2} we obtain the following result for the case $q=2$.

\begin{corollary}
\label{cor:simplex}
Let $(X_i)_{i=1}^k$ be a non-increasing sequence of non-negative integers and let $S=\{X_ie_i \mid 1 \le i \le k\}$. Then
\begin{equation*}
\left \lceil \sum\limits_{i=1}^k  
\frac{X_i}{2^{i-1}}  \right \rceil\le n_2(S) \le  \sum\limits_{i=1}^k \left \lceil \frac{X_i}{2^{i-1}}  \right \rceil.
\end{equation*}
\end{corollary}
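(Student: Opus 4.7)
The plan is to combine the two preceding theorems, since the corollary is exactly a sandwich of their conclusions with only a small exponent rewriting to verify. First I would invoke Theorem~\ref{thm:lb} with the specific choice $q = 2$; its statement yields $n_2(S) \ge \lceil \sum_{i=1}^k 2^{1-i} X_i \rceil$, and the elementary identity $2^{1-i} = 1/2^{i-1}$, valid for every integer $i \ge 1$, rewrites this as $\lceil \sum_{i=1}^k X_i/2^{i-1} \rceil$, exactly matching the lower bound claimed. I would then quote Theorem~\ref{thm:lb2} verbatim to obtain the upper bound $n_2(S) \le \sum_{i=1}^k \lceil X_i/2^{i-1} \rceil$. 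Concatenating the two inequalities completes the argument.

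Because the proof reduces to a chain of two prior results plus an exponent identification, there is no genuine obstacle; the hypotheses on $(X_i)_{i=1}^k$ (non-increasing and non-negative integers) are identical to those of both Theorem~\ref{thm:lb} and Theorem~\ref{thm:lb2}, so no additional verification is required. The observation I would flag is that the two bounds differ only in the placement of the ceiling: the lower one applies a single outer ceiling while the upper one applies one ceiling per summand, so the gap between them is at most $k - 1$. Consequently, whenever every ratio $X_i/2^{i-1}$ happens to be an integer—for instance when $2^{i-1}$ divides $X_i$ for each $i$—the two sides coincide and the corollary pins $n_2(S)$ exactly. This clean meeting point seems worth highlighting and is arguably the only content beyond a formal citation of the two theorems.
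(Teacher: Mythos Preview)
Your proposal is correct and matches the paper's approach exactly: the corollary is stated as an immediate combination of Theorems~\ref{thm:lb} (specialized to $q=2$) and~\ref{thm:lb2}, with only the trivial rewriting $2^{1-i}=1/2^{i-1}$ needed. Your closing remark about the bounds coinciding when $2^{i-1}\mid X_i$ is also precisely the observation the paper makes immediately after the corollary.
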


Note that the lower and upper bounds of Corollary \ref{cor:simplex} coincide when $2^{i-1}$ divides $X_i$ for all $i \in \{2,\ldots,k\}.$ Moreover, if~$X_i = 2^{k-1}$ for all $i \in \{1,\ldots,k\}$, then the well-known binary simplex code (of length $2^k-1$) attains $n_2(S)$.

%%%%%%%%%%%%%%%%%%%%%%%%%%%%%%%%%%%%%%%%%%%%%%%%%%%%%%%%%%%%%%%%%%%%%%%%%%%
%%%%%%%%%%%%%%%%%%%%%%%%%%%%%%%%%%%%%%%%%%%%%%%%%%%%%%%%%%%%%%%%%%%%%%%%%%%

\section{Applications and Examples}
\label{sec:interplay}

We provide evidence of how the established 
results can be applied in concrete scenarios. The proofs of the results in this paper use different techniques, which capture different mathematical properties of the quantities~$n_q(S)$ and~$q_n(S)$. As a natural consequence, the results perform best when combined. For instance,
Theorem \ref{thm:sandwich} together with Proposition \ref{prop:notexist} can be used to compute the value of~$n_q(S)$ in some cases.

\begin{example}
Let $S = \{(2.5,1),(1,2)\}$. Following the notation of Theorem~\ref{thm:sandwich}, we have $\alpha(S) = 4$ and $\beta(S)=5$. Therefore,~$4 \le n_q(S) \le 5$. The number $q_4(S)$ does not exist for any $q\in \mP$ by Proposition \ref{prop:notexist}. Thus $n_q(S)$ has to be equal to~$5$ for every $q \in \mP$.
\end{example}

A general observation is that
bounds on $n_q(S)$ that depend on $q$
automatically give information on $q_n(S)$. 
That is the case of 
Theorem~\ref{thm:lb}, which implies the following result.

\begin{corollary}
\label{cor:qnSfromnqS}
Let $(X_i)_{i=1}^k$ be a non-increasing sequence of non-negative integers and let
$S=\{X_ie_i \mid 1 \le i \le k\}$. For any $n \in \Z_{\ge k}$ s.t.\ $q_n(S)$ exists, we have 
$$q_n(S) \ge \frac{X_2}{n-X_1}.$$
\end{corollary}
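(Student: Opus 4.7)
The plan is to extract the bound directly from Theorem~\ref{thm:lb} by rearranging the dependence on $q$. Suppose $n \in \Z_{\ge k}$ is such that $q_n(S)$ exists, and set $q := q_n(S)$. By definition of $q_n(S)$ there is a matrix $G \in \F_q^{k \times n}$ of rank $k$ with $S \subseteq \Lambda(G)$, hence $n_q(S) \le n$. Combining this with the lower bound from Theorem~\ref{thm:lb} applied at the field size $q$ gives
\begin{equation*}
n \;\ge\; n_q(S) \;\ge\; \left\lceil \sum_{i=1}^k q^{1-i} X_i \right\rceil \;\ge\; X_1 + \frac{X_2}{q},
\end{equation*}
where in the last step I drop the non-negative terms with $i \ge 3$ and drop the ceiling.

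From $n - X_1 \ge X_2/q$ the corollary follows upon dividing by $n - X_1$, provided $n - X_1 > 0$. The only point worth checking is that this division is legitimate: if $X_2 = 0$ the stated inequality $q \ge X_2/(n-X_1)$ is vacuous (interpreting it as $q \ge 0$), so we may assume $X_2 > 0$. In that case $X_2/q > 0$, which forces $n - X_1 > 0$, so the rearrangement $q \ge X_2/(n - X_1)$ is valid.

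There is no real obstacle here; the statement is essentially a one-line consequence of Theorem~\ref{thm:lb} after truncating the sum after the $i=2$ term. The mild subtlety is handling the edge case $n = X_1$, which the argument above rules out whenever the bound is non-trivial (i.e.\ $X_2 > 0$), so the inequality is always meaningful when it is non-vacuous.
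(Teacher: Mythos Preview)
Your proof is correct and follows essentially the same approach as the paper: apply Theorem~\ref{thm:lb} at $q=q_n(S)$, use $n\ge n_q(S)$, truncate the sum after the $i=2$ term, and rearrange. Your version is in fact slightly more careful than the paper's, as you make explicit the step $n\ge n_q(S)$ and address the edge case $n=X_1$.
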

\begin{proof}

Let $n \in \Z_{\ge k}$ s.t.\ $q_n(S)$ exists. By Theorem \ref{thm:lb}, we have $$n \ge \sum\limits_{i=1}^k \left \lceil q^{1-i}X_i  \right \rceil \ge X_1 + \frac{X_2}{q},$$ yielding the result.
\end{proof}

Even though 
Corollary \ref{cor:qnSfromnqS} is a quite coarse approximation of Theorem~\ref{thm:lb},
it provides very good bounds in some instances.

\begin{example}
Let $S = \{(100,0),(0,99)\}$. Corollary \ref{cor:qnSfromnqS} gives~$q_{100}(S) \ge 99$ which implies that $q_{100}(S) \ge 101$ since it must be a prime power. It can also be checked that~$n_2(S) =150$ in this example.
\end{example}

We conclude the paper by showing that Corollary \ref{cor:qnSfromnqS} can actually be met with equality for some parameter sets.

\begin{example}
Let $S = \{(2,0),(0,2)\}$ and $n=3$. Then Corollary \ref{cor:qnSfromnqS} implies that $q_3(S) \ge 2$ and it is met with equality, since $S \subseteq \Lambda(G)$ for $$G=
\begin{pmatrix}
1 & 0 & 1   \\
0 & 1 & 1  
\end{pmatrix}\in\F_2^{2\times 3}.$$
\end{example}

\section*{Conclusions and Future Work}
We addressed the problem of computing the minimum number of servers and minimum alphabet size to support a demand vector for the service rate region problem. We proposed a method to obtain the exact values of the first parameter
based on optimization.
We then established upper and lower bounds for both parameters, studied their interplay, and discussed their sharpness. Future work includes sharpening the bounds and investigating more general sets of demand vectors than those considered in this paper.

%%%%%%%%%%%%%%%%%%%%%%%%%%%%%%%%%%%%%%%%%%%%%%%%%%%%%%%%%%%%%%%%%%%%%%%%%%%
\newpage
%%%%%%%%%%%%%%%%%%%%%%%%%%%%%%%%%%%%%%%%%%%%%%%%%%%%%%%%%%%%%%%%%%%%%%%%%%%

\bibliography{isit}

\begin{thebibliography}{10}

\bibitem{aktacs2017service}
M.~Akta{\c{s}}, S.~E. Anderson, A.~Johnston, G.~Joshi, S.~Kadhe, G.~L.
  Matthews, C.~Mayer, and E.~Soljanin, ``On the service capacity region of
  accessing erasure coded content,'' in {\em 2017 55th Annual Allerton Conf.\
  on Communication, Control, and Computing (Allerton'17)}, 2017.

\bibitem{aktacs2021service}
M.~Akta{\c{s}}, G.~Joshi, S.~Kadhe, F.~Kazemi, and E.~Soljanin, ``Service rate
  region: A new aspect of coded distributed system design,'' {\em IEEE Trans.\
  on Information Theory}, vol.~67, no.~12, pp.~7940--7963, 2021.

\bibitem{kazemi2020combinatorial}
F.~Kazemi, E.~Karimi, E.~Soljanin, and A.~Sprintson, ``A combinatorial view of
  the service rates of codes problem, its equivalence to fractional matching
  and its connection with batch codes,'' in {\em 2020 IEEE International
  Symposium on Information Theory (ISIT)}, 2020.

\bibitem{kazemi2020geometric}
F.~Kazemi, S.~Kurz, and E.~Soljanin, ``A geometric view of the service rates of
  codes problem and its application to the service rate of the first order
  reed-muller codes,'' in {\em 2020 IEEE International Symposium on Information
  Theory (ISIT)}, pp.~66--71, 2020.

\bibitem{alfarano2022dual}
G.~N. Alfarano, A.~Ravagnani, and E.~Soljanin, ``Dual-code bounds on multiple
  concurrent (local) data recovery,'' in {\em 2022 IEEE International Symposium
  on Information Theory (ISIT)}, pp.~2613--2618, 2022.

\bibitem{joshi2012coding}
G.~Joshi, Y.~Liu, and E.~Soljanin, ``Coding for fast content download,'' in
  {\em Proceedings of the Allerton Conference on Comm., Control and Computing},
  pp.~326--333, Oct. 2012.

\bibitem{gardner2015reducing}
K.~Gardner, S.~Zbarsky, S.~Doroudi, M.~Harchol-Balter, E.~Hyyti{\"a}, and
  A.~Scheller-Wolf, ``Reducing latency via redundant requests: Exact
  analysis,'' in {\em Proceedings of the {ACM} {SIGMETRICS}}, Jun. 2015.

\bibitem{shah2016when}
N.~B. Shah, K.~Lee, and K.~Ramchandran, ``When do redundant requests reduce
  latency?,'' {\em {IEEE} Transactions on Communications}, vol.~64, no.~2,
  pp.~715--722, 2016.

\bibitem{rashmi2016ec}
K.~V. Rashmi, M.~Chowdhury, J.~Kosaian, I.~Stoica, and K.~Ramchandran,
  ``Ec-cache: Load-balanced, low-latency cluster caching with online erasure
  coding,'' in {\em {USENIX} Symposium on Operating Systems Design and
  Implementation ({OSDI})}, (Savannah, GA), pp.~401--417, 2016.

\bibitem{lee2017speeding}
K.~Lee, M.~Lam, R.~Pedarsani, D.~Papailiopoulos, and K.~Ramchandran, ``Speeding
  up distributed machine learning using codes,'' {\em IEEE Transactions on
  Information Theory}, vol.~64, no.~3, pp.~1514--1529, 2017.

\bibitem{dutta2016short}
S.~Dutta, V.~Cadambe, and P.~Grover, ``Short-dot: Computing large linear
  transforms distributedly using coded short dot products,'' in {\em Advances
  In Neural Information Processing Systems}, pp.~2100--2108, 2016.

\bibitem{mallick2020rateless}
A.~Mallick, U.~Sheth, G.~Palanikumar, M.~Chaudhari, and G.~Joshi, ``{Rateless
  Codes for Near-Perfect Load Balancing in Distributed Matrix-Vector
  Multiplication},'' in {\em {ACM Sigmetrics 2020}}, May 2020.

\bibitem{alloc:PengNS21}
P.~Peng, M.~Noori, and E.~Soljanin, ``Distributed storage allocations for
  optimal service rates,'' {\em {IEEE} Trans. Commun.}, vol.~69, no.~10,
  pp.~6647--6660, 2021.

\bibitem{alloc:PengS18}
P.~Peng and E.~Soljanin, ``On distributed storage allocations of large files
  for maximum service rate,'' in {\em 56th Annual Allerton Conference on
  Communication, Control, and Computing, Allerton 2018, Monticello, IL, USA,
  October 2-5, 2018}, pp.~784--791, {IEEE}, 2018.

\bibitem{kazemi2021efficient}
F.~Kazemi, S.~Kurz, E.~Soljanin, and A.~Sprintson, ``Efficient storage schemes
  for desired service rate regions,'' in {\em 2020 IEEE Information Theory
  Workshop (ITW)}, 2021.

\bibitem{macwilliams1977theory}
F.~J. MacWilliams and N.~J.~A. Sloane, {\em The theory of error-correcting
  codes}, vol.~16.
\newblock Elsevier, 1977.

\bibitem{service2023polytope}
G.~N. Alfarano, A.~B. K{\i}l\i\c{c}, A.~Ravagnani, and E.~Soljanin, ``The
  service rate region polytope,'' {\em to appear in SIAGA}, 2024.

\bibitem{ziegler2012lectures}
G.~M. Ziegler, {\em Lectures on polytopes}, vol.~152.
\newblock Springer Science \& Business Media, 2012.

\bibitem{schrijver1998theory}
A.~Schrijver, {\em Theory of linear and integer programming}.
\newblock John Wiley \& Sons, 1998.

\bibitem{andrews1998theory}
G.~E. Andrews, {\em The theory of partitions}.
\newblock No.~2, Cambridge university press, 1998.

\end{thebibliography}
\bibliographystyle{ieeetr}

\end{document}